\def\a{\AxiomC}
\def\u{\UnaryInfC}
\def\b{\BinaryInfC}
\def\t{\TrinaryInfC}
\def\n{\noLine}
\title{Revisiting the proof theory of Classical \textbf{S4}}
\author{Bruno Lopes$^1$ \and Cec\'\i ilia Englander$^2$ \and Fernanda Lobo$^3$ \and Marcela Cruz$^2$ \and \vspace{1em}\\
$^1$Instituto de Computa\c c\~ao\\Universidade Federal Fluminense\\\href{mailto://bruno@ic.uff.br}{bruno@ic.uff.br} \vspace{1em}\\
$^2$Departamento de Inform\'atica\\Pontif\'\i cia Universidade Cat\'olica do Rio de Janeiro\\\href{mailto://cenglander@inf.puc-rio.br}{cenglander@inf.puc-rio.br}, \href{mailto://mcruz@inf.puc-rio.br}{mcruz@inf.puc-rio.br}\vspace{1em}\\
$^3$Departamento de Filosofia\\Pontif\'\i cia Universidade Cat\'olica do Rio de Janeiro\\\href{mailto://lobo.fernanda@gmail.com}{lobo.fernanda@gmail.com}}
\date{September, 2015}
\begin{document}

\maketitle

\begin{abstract}
In 1965 Dag Prawitz presented an extension of Gentzen-type systems of Natural Deduction to modal concepts of \textbf{S4}.
Maria da Paz Medeiros showed in 2006 that the proof of normalisation for classical \textbf{S4} does not hold and proposed a new proof of normalisation for a logically equivalent system, the system \textbf{NS4}.
However two problems in the proof of the critical lemma used by Medeiros in her proof were pointed out by Yuuki Andou in 2009.
This paper presents a proof of the critical lemma, resulting in a proof of normalisation for \textbf{NS4}.
\end{abstract}
 
\section{Introduction}
 
In his Ph.D.\ thesis, Dag Prawitz \cite{Prawitz2006} extended the Gentzen-type systems of Natural Deduction (ND) to modal concepts, obtaining Gentzen-type systems of ND for \textbf{S4} based on classical, intuitionistic and minimal predicate logic.
For this purpose, a modal operator of necessity (here represented by $\Box$) was added together with the rules for its introduction and elimination.
Prawitz then presented three formalizations of those modal systems, which differed in the restrictions on the introduction rule for the $\Box$, and stated that only the third one would accept the Normalisation Theorem.
 
About forty years later, Maria da Paz Medeiros \cite{Medeiros2006} argued that Prawitz's normalisation procedure does not work on the third version of the ND system for classical \textbf{S4}, and proposed a new system for \textbf{S4}, the \textbf{NS4}, for which the Normalisation Theorem would hold.
 
However, recently Yuuki Andou \cite{Yuuki2009} pointed out two problems in the proof of a lemma (the critical lemma) that plays a crucial role in Medeiros' s proof of the Normalisation Theorem.
Andou presented a Normal Form Theorem~\cite{Yuuki2009}, showing that for any proof $\Pi$ there is a proof $\Pi'$ in normal form by means of cut-elimination, but do not present a normalisation procedure.

In this paper we present a correction of Medeiros's proof of the aforementioned lemma and fulfil a normalisation procedure for \textbf{NS4}, which gives a computational interpretation of proofs by means of the Curry-Howard Correspondence.

After some definitions used in the present work (Section~\ref{sec:definitions}), we outline the original third version of Prawitz's system for classical \textbf{S4} and the counterexamples by Medeiros (Section~\ref{sec:medeiros}).
We then discuss the two cases in which the system may not produce valid derivations on \textbf{NS4} due to problems in the proof of the critical lemma (Section~\ref{sec:yuuki}).
In Section~\ref{sec:we} we present a proof of the critical lemma for \textbf{NS4}.
Our concern here is with Classical Propositional \textbf{S4}, but an extension of Classical First Order \textbf{S4} could be easily obtained by adding the corresponding rules for quantifiers.

\section{Definitions}\label{sec:definitions}

Based in the work of Maria da paz Medeiros \cite{Medeiros2006}, we present the definitions used in this work.

\newtheorem{dfn}{Definition}
\begin{dfn}
The premisses ($A \to B$) of the rule ($E_\to$), ($A \land B$) of $(E_\land)$, 
($A \lor B$) of $(E_\lor)$, ($\Box A$) of $(E_\Box)$, and the premisses $\Box B_1,...,\Box B_n$
 of ($I_\Box$) are called \emph{major premisses} and the others \emph{minor premisses}.
\end{dfn}

\begin{dfn}
A \emph{segment} in a derivation is a sequence $A_1, ..., A_n$ of occurrences of the same formula in a branch such that $A_1$ is not the conclusion of an application of ($E_\lor$) nor a discharged assumption through an application of ($I_\Box$), and $A_n$ is not a minor premiss of ($E_\lor$) nor a major premiss of ($I_\Box$).
\end{dfn}

\begin{dfn}
The \emph{length of a segment} is the number of formula occurrences in this segment.
\end{dfn}

\begin{dfn}
A \emph{maximal segment} in a derivation is a segment $A_1,...,A_n$ such that $A_1$ is the conclusion of an application of an introduction rule or ($\bot_c$), and $A_n$ is a major premiss of an application of an elimination rule.
\end{dfn}

\begin{dfn}
A \emph{maximal formula} is a maximal segment whose length is 1(one). A premiss is called \emph{maximal premiss} if it belongs to some maximal segment.
\end{dfn}

\begin{dfn}
A formula A is a \emph{trivial formula} if A is the conclusion of an application of ($\bot_c$) and the minor premiss of an application of ($E_\to$) whose major premiss is the assumption $\neg A$.
\end{dfn}

\begin{dfn}
	The \emph{degree of a formula} A, $G(A)$, is the number of occurrences in $A$ of logical symbols different from $\bot$. The degree of a segment is the degree of the formula that belongs to this segment.
\end{dfn}

\begin{dfn}
The \emph{degree of a derivation} $\Pi$, $G(\Pi)$ is the highest degree of a maximal segment of $\Pi$. If $\Pi$ does not have maximal segments, then $G(\Pi)=0$.
\end{dfn}

\begin{dfn}
	A \emph{critical derivation} is a derivation $\Pi$ such that, if $G(\Pi) = d$, then the last inference of $\Pi$ has a maximal premiss with degree $d$, and for every subderivation $\Sigma$ of $\Pi$, $G(\Sigma) < G(\Pi)$.
\end{dfn}

\begin{dfn}
The \emph{index} of a derivation $\Pi$ is $I(\Pi)=\langle d,s\rangle$, where $s$ is the sum of the lengths of the maximal segments of $\Pi$ whose degree is $d$.
If $\Pi$ does not have maximal segments, then $I(\Pi)=\langle 0,0\rangle$.
\end{dfn}

\begin{dfn}
A derivation $\Pi$ is a \emph{normal derivation} if $\Pi$ does not have maximal segments.
\end{dfn}
    
\begin{dfn} 
A formula $A$ is \emph{essentially modal} when each occurrence of a predicate parameter or predicate constant  in $A$ stands within the scope of an occurrence of $\Box$.
\end{dfn}

\section{A counterexample for Prawitz's classical \textbf{S4} system}\label{sec:medeiros}
 
According to the restriction on the $\Box$-I rule in Prawitz's third version of {\textbf{S4}}, if a formula $A$ depends on an assumption $B$ and there exists an essentially modal formula $F$ on the thread of $A$ from $B$ such that $A$ depends on every assumption which $F$ depends on, then $\Box$-I could be applied at $A$.

But Maria da Paz Medeiros \cite{Medeiros2006} argued that such restriction would not avoid maximal formulas by pointing out that although the following derivation
 \begin{prooftree}
  \a{$[\neg\Box A]^1$}\a{$\neg\Box A\to B$}\b{$B$}\a{$\neg B$}\b{$\bot$}\RightLabel{\scriptsize{1}}\u{$\Box A$}\u{$A$}\u{$C\to A$}\u{$\Box(C\to A)$}
 \end{prooftree}
is valid in {\textbf{S4}}, its reduction is not:
 \begin{prooftree}
   \a{$[\Box A]^1$}\u{$A$}
   \a{$[\neg A]^2$}\RightLabel{\scriptsize{1}}\b{$\bot$}\u{$\neg\Box A$}
 \a{$\neg\Box A\to B$}\b{$B$}\a{$\neg B$}\b{$\bot$}\RightLabel{\scriptsize{2}}\u{$A$}\u{$C\to A$}\u{$\Box(C\to A)$}
 \end{prooftree}
 
It was presented a new Natural Deduction for {\textbf{S4}} system, called system {\textbf{NS4}} and proposed a normalisation proof by critical lemma \cite{Medeiros2006}.
This new system is composed of the logical symbols $\land, \lor, \to, \bot\mbox{ and }\Box$ and the usual rules, except for $\Box$-I, which is as follows:
\begin{prooftree}
 \a{$\Box\vec{B}$}\a{$[\Box\vec{B}]^k$}\n\u{$\Lambda$}\n\u{$A$}\RightLabel{\scriptsize{$k$}}\b{$\Box A$}
\end{prooftree}
 
(We use \a{$\vec{F}$}\n\u{$\Box \vec{B}$}\DisplayProof for a sequence of deductions of the form \a{$\Pi_i$}\n\u{$\Box B_i$}\DisplayProof,
where no two $\Box B_i$'s are equal.
We also write $[\Box \vec{B}]^k$ to indicate that each $\Box B_i$ is discharged at $k$.
Labels may be dropped.)

This restriction on $\Box$-I rule states that \emph{all} the assumptions in $[\Box\vec{B}]^k$ must be discharged by the application of $\Box$-I and the premiss $A$ must not depend on any assumption other than the ones in $\Box\vec{B}$. 
The reason for this is explained in item 4 of the proof of the critical lemma and it does not affect the completeness of the system.
 
Together with this new $\Box$-I rule, we have the following reduction:
\begin{eqnarray}\label{reduction}
  \a{$\vec{\Sigma}$}\n\u{$\Box\vec{B}$}\a{$[\Box\vec{B}]^k$}\n\u{$\Lambda$}\n\u{$A$}\RightLabel{\scriptsize{$k$}}\b{$\Box A$}\u{$A$}\DisplayProof
&\rhd&
  \a{$\vec{\Sigma}$}\n\u{$(\Box\vec{B})$}\n\u{$\Lambda$}\n\u{$A$}\DisplayProof
\end{eqnarray}
 
\noindent and the permutative reduction bellow:
 \begin{eqnarray}\label{permutation}
 \a{$\vec{\Sigma}$}\n\u{$\Box\vec{A}$}
 \a{$[\Box\vec{A}]^i$}\n\u{$\Lambda_1$}\n\u{$B$}
 \RightLabel{\scriptsize{$i$}}\b{$\Box B$}
 \a{$\vec\Psi$}\n\u{$\Box\vec D$}
 \a{$[\Box B]^j[\Box\vec D]^{k}$}\n\u{$\Lambda_{2}$}\n\u{$C$}
 \RightLabel{\scriptsize{$k,j$}}
 \t{$\Box C$}
 \DisplayProof\rhd\nonumber\\\nonumber\\\\
 \a{$\vec\Sigma$}\n\u{$\Box\vec A$}
 \a{$\vec\Psi$}\n\u{$\Box\vec D$}
 \a{$[\Box\vec A]^{j}$}
 \a{$[\Box\vec A]^{k}$}\n\u{$\Lambda_1$}\n\u{B}
 \RightLabel{\scriptsize{$k$}}
 \b{$(\Box B)$}
 \a{$[\Box\vec D]^i$}\n
 \b{$\Lambda_2$}\n\u{C}
 \RightLabel{\scriptsize{$j,i$}}
 \t{$\Box C$}
 \DisplayProof\nonumber
 \end{eqnarray}

\section{A problem in the normalisation proof of \textbf{NS4}}\label{sec:yuuki}
 
Medeiros's normalisation proof \cite{Medeiros2006} begins with the assumption that a derivation $\Pi$ of $C$ from $\Gamma$ can be transformed in a derivation $\Pi_0$. The aim is to show that $I(\Pi_0)<I(\Pi)$. Next, it uses a critical lemma according to which, if $\Pi$ is a critical derivation of $C$ from $\Gamma$, then $\Pi$ can be transformed into a derivation $\Pi'$ such that $I(\Pi')<I(\Pi)$.
By the critical lemma, a subderivation $\Sigma$ of $\Pi$ can be transformed in a subderivation $\Sigma'$ such that $I(\Sigma')<I(\Sigma)$; but, then, $\Pi_1$ is the derivation resulted from the substitution of $\Sigma'$ for $\Sigma$ in $\Pi_0$, and $I(\Pi_1)<I(\Pi_0)$. 

However, recently Yuuki \cite{Andou2009} pointed out two flaws in the proof of Medeiros's critical lemma. 
The first one concerns critical derivations of the form
 \begin{eqnarray}\label{andou:1}
 \Pi\equiv
 \a{$\Sigma_{0,1}$}\n\u{$F$}
 \a{$[\neg F]^i$}
 	\b{$\bot$}\n\u{$\Sigma_{0,2}$}\n\u{$\bot$}\RightLabel{\scriptsize $i$}\u{$F$}
 \a{$\vec\Sigma$}\n\u{$\vec H$}\RightLabel{\scriptsize $r$}\b{$C$}\DisplayProof
 \end{eqnarray}
 where the major premiss $F$ is the conclusion of $\bot_C$ and $r$ is an elimination rule.
 According to Medeiros, this derivation can be transformed into
 \begin{eqnarray}
 \Pi'\equiv
 \a{$\Sigma_{0,1}$}\n\u{$F$}
 \a{$\vec\Sigma$}\n\u{$\vec H$}\RightLabel{\scriptsize $r$}
 	\b{$\bot$}\n\u{$\Sigma_{0,2}$}\n\u{$\bot$}\DisplayProof
 &\mbox{ or to }&
 \Pi''\equiv
\a{$\Sigma_{0,1}$}\n\u{$F$}
\a{$\vec\Sigma$}\n\u{$\vec H$}\RightLabel{\scriptsize $r$}
	\b{$C$}\a{$[\neg C]^i$}\b{$\bot$}\n\u{$\Sigma_{0,2}$}\n\u{$\bot$}\RightLabel{\scriptsize $i$}\u{$C$}\DisplayProof
 \end{eqnarray} depending on $C$ being $\bot$ or not.
 
Note that the assumptions of the form $\neg F$ discharged at the rule $i$ may occur more than once in $\Pi$,
and that the premiss $F$ which is conclusion of $\Sigma_{0,1}$ may be a maximal premiss in $\Pi'$ and in 
$\Pi''$.
In this case the index of either $\Pi'$ or $\Pi''$ may be even greater than that of $\Pi$.
Besides, one of the $H_i$'s in $\vec H$, say $H_l$, may be a maximal formula of degree $G(F)$ and, in this case, even if the $F$ side connected with $\neg F$ is not a maximal formula in $\Pi'$, this $H_l$ still is and the induction hypothesis cannot be used.
 
The second problem pointed out is when $\Pi$ has degree $G(\Box A)$ and is a critical derivation of the form
 
 \begin{eqnarray}\label{andou:2}
 \Pi\equiv &
 \a{$\vec\Sigma$}\n\u{$\Box\vec B$}
 \a{$[\Box\vec B]^k$}\n\u{$\Lambda_1$}\n\u{$A$}\RightLabel{\scriptsize $k$}
 \b{$\Box A$}
 \a{$[\neg\Box A]^i$}
 \b{$\bot$}\n\u{$\Sigma_0$}\n\u{$\bot$}\RightLabel{\scriptsize $i$}\u{$\Box A$}
 \a{$\vec\Psi$}\n\u{$\Box\vec C$}
 \a{$[\Box A]^j[\Box C]^\ell$}\n\u{$\Lambda_2$}\n\u{$B$}
 \RightLabel{\scriptsize $\ell,j$}\t{$\Box B$}
 \DisplayProof
 \end{eqnarray}
 
 If $\Box A$ occurs more than once as top-formula of $\Lambda_2$, by reducing $\Pi$ to
 $$
 \Pi'\equiv
 \a{$\vec\Sigma$}\n\u{$\Box\vec B$}
 \a{$\vec\Psi$}\n\u{$\Box\vec C$}
 \a{$[\Box\vec B]^j$}
 \a{$[\Box\vec B]^k$}\n\u{$\Lambda_1$}\n\u{$A$}
 \RightLabel{\scriptsize $k$}\b{$\Box A$}
 \a{$[\Box C]^\ell$}
 \n\b{$\Lambda_2$}\n\u{$B$}
 \RightLabel{\scriptsize $\ell,j$}\t{$\Box B$}
 \a{$[\neg\Box B]^i$}
 \b{$\bot$}\n\u{$\Sigma_0$}\n\u{$\bot$}\RightLabel{\scriptsize $i$}\u{$\Box B$}
 \DisplayProof
 $$
 the number of occurrences of $\Box A$ as maximal formula in $\Pi'$ will be greater than in $\Pi$.
 
Thus, it is possible that the reduction process generates copies of maximal formulas, so the index of $\Pi'$ may be greater than that of $\Pi$.

\section{Yet another proof of the critical lemma}\label{sec:we}

We present a proof of the critical lemma for the fragment $\{\land,\to,\bot,\Box\}$.
Extensions to First Order Logic is only handwork.

\newtheorem{reescrita}{Lemma}
\begin{reescrita}
 A critical derivation of the form $\Pi\equiv$
\a{$[\neg F]^1$}\n\u{$\Sigma_1$}\n\u{$\bot$}\RightLabel{\scriptsize 1}\u{$F$}
\a{$\vec{\Sigma}$}\n\u{$\vec{B}$}
\RightLabel{\scriptsize $r$}\b{$C$}\DisplayProof
where $F$ is the conclusion of $\bot_c$, can be transformed in a derivation\\
$\Pi_1\equiv$
\a{$\Sigma'_{1,1}$}\n\u{$F$}
\a{$[\neg F]^1$}
\b{$\bot$}\n\u{$\Sigma'_{1,2}$}\n\u{$\bot$}\RightLabel{\scriptsize 1}\u{$F$}
\a{$\vec{\Sigma}$}\n\u{$\vec{B}$}
\RightLabel{\scriptsize $r$}\b{$C$}\DisplayProof
where $\Pi_1$ is a derivation without trivial formulas.
Thus, the end-formula of $\Sigma'_{1,1}$ is not conclusion of $\bot_c$.
\end{reescrita}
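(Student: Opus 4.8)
The plan is to apply the transformation $\propto$ to the subderivation $\Delta$ of $\Pi$ that derives $\bot$ from $[\neg F]^1$ and from the proper assumptions $\Gamma$ of $\Pi$ — that is, $\Delta$ is the derivation displayed as $[\neg F]^1$, then $\Sigma_1$, then $\bot$ — obtaining a simplified derivation $\Delta'$ of $\bot$ without trivial formulas, with the same end-formula and the same proper assumptions; then we rebuild $\Pi_1$ by re-applying $\bot_c$ to $\Delta'$, discharging every occurrence of $\neg F$ that $\Delta'$ contains, and finally re-applying the elimination rule $r$ with the untouched minor deductions $\vec\Sigma$ of $\vec B$. The whole point is to display $\Delta'$ in the shape required for $\Pi_1$: to single out, inside $\Delta'$, one occurrence of $\neg F$ that is the major premiss of an application of $\to$-E whose minor premiss $F$ ends a subderivation $\Sigma'_{1,1}$, and then to read off $\Sigma'_{1,2}$ as the part of $\Delta'$ lying below the conclusion $\bot$ of that $\to$-E.

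First I would dispose of the bookkeeping. Since $r$ is an elimination rule with major premiss $F$, the formula $F$ is not $\bot$, so $\neg F$ is a genuine negation and $\to$-E really applies to it. By the defining behaviour of $\propto$ the end-formula and the proper assumptions are preserved and $\Delta'$ has no more maximal formulas of degree $\geq G(F)$ than $\Delta$; since the part of $\Pi$ outside $\Delta$ is untouched and the reconstruction inserts only one new $\to$-E — whose minor premiss $F$ is, by construction, not the conclusion of an introduction rule nor of $\bot_c$, and whose conclusion $\bot$ is not the major premiss of any elimination — nothing maximal of degree $\geq G(F)$ is added, so $\Pi_1$ has no more maximal formulas of degree $\geq G(F)$ than $\Pi$. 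Moreover, because $\Delta'$ has no trivial formulas, the minor premiss $F$ of this distinguished $\to$-E is not the conclusion of $\bot_c$, which is exactly the asserted property that the end-formula of $\Sigma'_{1,1}$ is not the conclusion of $\bot_c$.

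The delicate step — and the one I expect to be the main obstacle — is the structural analysis that actually produces the distinguished occurrence of $\neg F$ inside $\Delta'$. Here one must use that $\Delta'$ is simplified: the discharge of $\neg F$ at the reconstructed $\bot_c$ cannot be vacuous, so $\neg F$ does occur in $\Delta'$, and one follows the track of $\Delta'$ that ends at its conclusion $\bot$. Because no introduction rule has conclusion $\bot$ and, $\Delta'$ being free of trivial formulas, no application of $\bot_c$ has conclusion $\bot$, this track is an elimination track and hence begins at a proper assumption; invoking that $\Pi$ is critical (so that an assumption from $\Gamma$ cannot by itself head an elimination track terminating in $\bot$) together with the subformula property of simplified derivations, one concludes that this track begins at an occurrence of $\neg F$, whose first inference is then the desired $\to$-E; $\Sigma'_{1,1}$ is the deduction feeding its minor premiss $F$ and $\Sigma'_{1,2}$ is the remainder of $\Delta'$ below its conclusion $\bot$. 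Finally, when $\neg F$ has several occurrences in $\Delta'$ one keeps the occurrence just found as the distinguished one while still discharging all of them at the reconstructed $\bot_c$; that this duplication does not increase the number of maximal formulas of degree $\geq G(F)$ is precisely the point the original argument overlooked, and it is guaranteed by the non-increase property of $\propto$ recorded above.
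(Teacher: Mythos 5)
The paper itself does not prove this lemma --- it simply cites Medeiros --- so there is no in-paper argument to compare yours against; judged on its own terms, your proposal has the right outline (simplify, remove trivial formulas, then expose one application of $E_\to$ on $\neg F$ and rebuild $\bot_c$ and $r$ underneath), but the step you yourself flag as delicate is where it breaks down. The displayed form of $\Pi_1$ requires an occurrence of $\neg F$ that is the \emph{major} premiss of an $E_\to$ whose minor premiss is $F$, and your track argument does not establish that one exists. Concretely: (a) freedom from trivial formulas does not imply that no application of $\bot_c$ has conclusion $\bot$ --- by the paper's definition a trivial formula is a $\bot_c$-conclusion that is the \emph{minor premiss} of an $E_\to$ against the assumption $\neg A$, which says nothing about $\bot_c$ concluding $\bot$; (b) simplified derivations do not enjoy a subformula property (that is a consequence of normalization, not of the minor premisses of $E_\lor$ being $\bot$), so you cannot invoke it; (c) criticality bounds the \emph{degrees} of maximal segments in proper subderivations, it says nothing about which assumptions can head an elimination track ending in $\bot$, so it does not exclude a track beginning at some $\neg G\in\Gamma$. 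A concrete configuration your argument does not handle: $\Gamma=\{\neg\neg F\}$ and $\Sigma_1$ consisting of a single $E_\to$ with major premiss $\neg\neg F$ and minor premiss the assumption $[\neg F]^1$; here $\neg F$ occurs only as a minor premiss, no occurrence of it is the major premiss of an $E_\to$ with minor premiss $F$, and the target shape of $\Pi_1$ is not reached by merely ``singling out'' an occurrence. Whatever transformation achieves the lemma must do more than relabel an existing inference in these cases (and must also treat vacuous discharge of $\neg F$), and that is precisely the content delegated to Medeiros.

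A secondary point: you justify the claim that $\Pi_1$ has no more maximal formulas of degree $\geq G(F)$ than $\Pi$ by appeal to ``the defining behaviour of $\propto$,'' but $\propto$ is the transformation this very lemma introduces; unless you mean the previously established operations of passing to a simplified derivation and deleting trivial formulas (each of which must be checked separately not to create maximal segments of degree $\geq G(F)$), the argument is circular at exactly the quantitative claim that Yuuki's objections make critical. The degree bookkeeping for the newly inserted $E_\to$ and $\bot_c$ is fine as you state it, but the bound for the $\propto$-phase needs its own verification rather than an appeal to the lemma being proved.
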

\begin{proof}
See the work of Medeiros \cite{Medeiros2006}.
\end{proof}

Note that $\Pi_1$ has no more maximal formulas of degree equal to or higher than $G(F)$ than $\Pi$.
We will use the symbol $\propto$ to indicate the transformation of a derivation into a derivation without trivial formulas.

\newtheorem{normalizacao}{Theorem}[section]
\begin{normalizacao}
  If $\Pi$ is a critical derivation of $C$ from $\Gamma$, then $\Pi$ can be transformed into a derivation $\Pi'$ such that $G(\Pi') < G(\Pi)$.
\end{normalizacao}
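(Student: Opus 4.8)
The plan is to proceed by induction on the number $n$ of maximal segments of degree $d:=G(\Pi)$ occurring in $\Pi$, with a subsidiary induction on the size of $\Pi$; the aim is to remove the distinguished critical segment $\mu$ (of degree $d$, the major premiss of an elimination rule $r$, with the thread above $\mu$ already free of maximal segments of degree $d$ because $\Pi$ is critical and simplified) by one reduction that introduces no maximal segment of degree $\geq d$, so that $n$ strictly decreases and, once it reaches $0$, the maximal degree has dropped below $d$.

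First I would bring $\Pi$ into the form treated by Lemma~1, applying the $\propto$-transformation so that each application of $\bot_c$ whose conclusion is the major premiss of an elimination is pushed below that elimination; by Lemma~1 the result is a simplified derivation without trivial formulas that contains no maximal segment of degree $\geq d$ beyond those already present. This settles the first difficulty raised by Yuuki: in a configuration such as \eqref{andou:1}, after the push the elimination $r$ now acts on the genuine end-formula of $\Sigma'_{1,1}$, which by Lemma~1 is not the conclusion of $\bot_c$, and the $\bot_c$ that remains concludes the formula $C$ delivered by $r$, whose degree is strictly below $d$ since $C$ is obtained from a premiss of degree $d$ by an elimination.

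Next, before reducing $\mu$ itself, I would apply the induction hypothesis to the subderivations feeding the minor premisses of $r$ and, in the modal case, to the subderivations $\vec\Sigma,\vec\Psi$ of the boxed side-formulas: each of these is a proper subderivation, hence contains strictly fewer than $n$ maximal segments of degree $d$ (the critical $\mu$ lies outside it) and, after selecting its own critical segment, falls under the induction hypothesis, so it can be replaced by a derivation of degree $<d$. This is precisely what neutralises Yuuki's second difficulty and the $H_l$-phenomenon: the reductions \eqref{reduction} and the permutative reduction \eqref{permutation} duplicate these subderivations whenever an assumption $\neg F$, or a top-formula $\Box A$ of $\Lambda_2$, occurs several times, but once the subderivations have degree $<d$ their copies contribute no maximal segment of degree $d$. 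With the surroundings thus prepared, $\mu$ is removed by the reduction matching its principal connective --- the usual $\to$- and $\wedge$-contractions, the $\vee$-permutations, the modal reduction \eqref{reduction}, and, in the configuration \eqref{andou:2}, the permutative reduction \eqref{permutation} followed by a further application of $\propto$ --- each of which merely relocates the contracted formula or replaces it by proper subformulas of itself, introducing no maximal segment of degree $\geq d$.

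The step I expect to be the main obstacle is the bookkeeping for the modal case \eqref{andou:2}: after \eqref{permutation} the displayed detour $\Box A$ is gone, but its several occurrences as top-formulas of $\Lambda_2$ must be tracked through the rearranged derivation, the discharge labels reassigned consistently, and one must check that the subsequent application of $\propto$ does not recreate a maximal segment of degree $d$; it is here that the standing assumption that the $\Box B_i$ are pairwise distinct, together with the absence of trivial formulas, makes the accounting close. Granting this verification, each reduction step strictly decreases $n$ while leaving the derivation simplified, so after finitely many steps we obtain a simplified derivation $\Pi'$ with $G(\Pi') < d = G(\Pi)$, as required.
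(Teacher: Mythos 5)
Your overall scaffolding matches the paper's: the same double induction (on the number of maximal formulas of degree $d=G(\Pi)$, then on length), the preliminary use of Lemma~1, and the standard reductions. But the step you defer with ``granting this verification'' is exactly the step the theorem is about, and the mechanism you offer in its place does not work. You propose first to apply the induction hypothesis to the proper subderivations ($\vec\Sigma$, $\vec\Psi$, the minor-premiss derivations) so that ``their copies contribute no maximal segment of degree $d$.'' This is vacuous: since $\Pi$ is critical, every proper subderivation already has degree $<d$ by definition, so there is nothing there to reduce. More importantly, the new maximal segments created by \eqref{permutation} and by the $\bot_c$-reductions do not lie \emph{inside} the duplicated subderivations; they arise at the interface where a copy is plugged in. In \eqref{andou:2}, each occurrence of $\Box A$ as a top-formula of $\Lambda_2$ that happens to be a major premiss of $\Box$-E or $\Box$-I becomes, after the permutation, the conclusion of the copied $\Box$-I --- a brand-new maximal formula of degree $d$ that did not exist in $\Pi$. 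Likewise in \eqref{andou:1}, after pushing $\bot_c$ below $r$, the end-formula of $\Sigma'_{1,1}$ is now the major premiss of $r$; Lemma~1 only rules out its being the conclusion of $\bot_c$, not of an introduction rule, so $F$ may again be maximal of degree $d$. In both situations your count $n$ can increase, and your induction does not close.

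The paper's proof spends essentially all of its effort at precisely these interfaces. In case~4 it inspects each newly created occurrence of $\Box A$ in $\Pi_1$: if it is a major premiss of $\Box$-E it is eliminated outright by the detour reduction of case~3 (the residual $\Box B_l$'s have degree $<d$ by criticality); if it is a major premiss of $\Box$-I it heads a critical subderivation $\Xi_2$ of strictly smaller length, to which the inner induction applies --- and this is where the standing assumption that no two $\Box B_i$'s coincide is needed to guarantee the length actually drops. Only after all these local repairs does one compare $\#G(\Pi_2)$ with $\#G(\Pi)$ and, if necessary, invoke the outer induction. In cases~5--8 the analogous move is an explicit case split on whether the end-formula of $\Sigma'_{1,1}$ is the conclusion of an introduction rule, followed by a further immediate reduction when it is. You would need to supply this analysis; without it the claim that each step ``introduces no maximal segment of degree $\geq d$'' is false as stated.
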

\begin{proof}
Suppose $\Pi$ is a critical derivation with maximal premisses of degree $G(\Pi)$ which are premisses of the last inference of $\Pi$, $\# G(\Pi)$ is the number of maximal formulas of $\Pi$ with degree $G(\Pi)$ and $\ell(\Pi)$ is the lenght of $\Pi$.
 The proof is by induction on the pair $\langle\# G(\Pi), \ell(\Pi)\rangle$.
\begin{enumerate}
 \item 
$\Pi\equiv$ \a{$\Sigma_1$}\n\u{$A$}\a{$\Sigma_2$}\n\u{$B$}\b{$A\land B$}\u{$A$}\DisplayProof
    $\rhd$ 
\a{$\Sigma_1$}\n\u{$A$}\DisplayProof
    $\equiv \Pi'$

It is easy to see that $G(\Pi')<G(\Pi)$.

 \item 
    $\Pi\equiv$
\a{$\Sigma_1$}\n\u{$A$}\a{$[A]$}\n\u{$\Sigma_2$}\n\u{$B$}\u{$A\to B$}\b{$B$}\DisplayProof
    $\rhd$
\a{$\Sigma_1$}\n\u{$(A)$}\n\u{$\Sigma_2$}\n\u{$B$}\DisplayProof
    $\equiv \Pi'$ 

It is easy to see that $G(\Pi')<G(\Pi)$.

 \item 
    $\Pi\equiv$
\a{$\vec{\Sigma}$}\n\u{$\Box \vec{B}$}
\a{$[\Box \vec{B}]$}\n\u{$\Lambda_1$}\n\u{$C$}\b{$\Box C$}\u{$C$}\DisplayProof
    $\rhd$
\a{$\vec{\Sigma}$}\n\u{$(\Box \vec{B})$}\n\u{$\Lambda_1$}\n\u{$C$}\DisplayProof
$\equiv \Pi'$

If there exists a $\Box B_l$ which is a maximal premiss at $\Pi'$, then it would be a maximal formula at $\Pi$ and, as $\Pi$ is a critical derivation, $G(\Box B_l) < G(\Box C)$.
Thus, $G(\Pi')<G(\Pi)$.

 \item\label{caso4} 
    $\Pi\equiv$
    \a{$\vec{\Sigma}$}\n\u{$\Box\vec{B}$}
		\a{$[\Box\vec{B}]^k$}\n\u{$\Lambda_1$}\n\u{$A$}
		\b{$\Box A$ \RightLabel{\scriptsize $k$}}  
\a{$\vec{\Psi}$}\n\u{$\vec{H}$}
\a{$[\Box A]^j, [\vec{H}]^l$}\n\u{$\Lambda_2$}\n\u{$C$}\RightLabel{\scriptsize $j,l$}
\t{$\Box C$}\DisplayProof
$\rhd$\\
\a{$\vec{\Sigma}$}\n\u{$\Box\vec{B}$}
\a{$\vec{\Psi}$}\n\u{$\vec{H}$}
    \a{$[\Box\vec{B}]^l$}\n    
 \a{$[\Box\vec{B}]^k$}\n\u{$\Lambda_1$}\n\u{$A$}\RightLabel{\scriptsize $k$}
	\b{$(\Box A)$}
    \a{$[\vec{H}]^j$}\n
	\b{$\Lambda_2$}\n\u{$C$}\RightLabel{\scriptsize $j,l$}
\t{$\Box C$}
\DisplayProof
$\equiv\Pi_1$

We have two cases to consider:
\begin{enumerate}
 \item\label{subcaso4a} 
 \emph{there is an occurrence of $\Box A$ which is top-formula of $\Lambda_2$ and major premiss of an application of $\Box$-E}:
in this case, the number of maximal formulas of degree $G(\Box A)$ in $\Pi_1$ may be even greater than that of $\Pi$, as there may be more than one occurrence of $\Box A$ as top-formula of $\Lambda_2$.

There is a critical subderivation $\Xi_1$ of $\Pi_1$ of the form
\a{$\Box\vec{B}$}\n
 \a{$[\Box\vec{B}]^k$}\n\u{$\Lambda_3$}\n\u{$A$}\RightLabel{\scriptsize $k$}
       \b{$\Box A$}\u{$A$}\DisplayProof
which can be reduced to  $\Xi'_1\equiv$\a{$\Box\vec{B}$}\n\u{$\Lambda_3$}\n\u{$A$}\DisplayProof (case 3).

 \item
 \emph{there is an occurrence of $\Box A$ which is top-formula of $\Lambda_2$ and major premiss of $\Box$-I} :
 then, there is a critical subderivation $\Xi_2$ of the form

\a{$\Box\vec{B}$}
\a{$[\Box\vec{B}]^k$}\n\u{$\Lambda_3$}\n\u{$A$}\RightLabel{\scriptsize $k$}
		  \b{$\Box A$}
\a{$\vec{\Upsilon}$}\n\u{$\Box\vec{C}$}
\a{$[\Box A]^p, [\Box\vec{C}]^q$}
\n\u{$\Lambda_{4}$}\n\u{$D$}\RightLabel{\scriptsize $p,q$}\t{$\Box D$}\DisplayProof

The lenght of $\Xi_2$ is smaller than the lenght of $\Pi$.
Hence, by induction hypothesis, we can reduce $\Xi_2$ to a $\Xi'_2$ such that $G(\Xi'_2)<G(\Pi)$.

Note that we cannot guarantee that the lenght of $\Xi_2$ is smaller than the lenght of $\Pi$ if there were more than one occurrence of $\Box A$ as top-formula of $\Lambda_2$ in $\Pi_1$, and if there were many occurrences of $\Box A$ as major premiss in $\Xi_2$.
That is the reason of the restriction on the beginning of the section.
\end{enumerate}

Let $\Pi_2$ be the result of replacing each occurrence of critical subderivations of the form of $\Xi_1$ and the form of $\Xi_2$ in $\Pi_1$ by $\Xi'_1$ and $\Xi'_2$ respectively.

If $\Box A$ is the only major premiss that is maximal formula in $\Pi$, i.e., there is no member of $\vec{H}$ which is a maximal premiss of the same degree of $\Pi$, then $G(\Pi_2)<G(\Pi)$ and $\Pi_2=\Pi$.
Otherwise, i.e., if there exists a $m$ such that $H_m$ is a maximal formula in $\Pi$, then $\#G (\Pi_2)< \# G(\Pi)$ and, as $\Pi_2$ is a critical derivation, by induction hypothesis $\Pi_2$ can be transformed into a derivation $\Pi'$ such that $G(\Pi')<G(\Pi_2)$.
Hence, as $\Pi$ was transformed into $\Pi_2$ and $G(\Pi_2)$ is not higher than $G(\Pi)$, $G(\Pi')<G(\Pi)$.

\item 
  $\Pi\equiv$
\a{$[\neg (A\land B)]^1$}\n\u{$\Sigma_1$}\n\u{$\bot$}\RightLabel{\scriptsize 1}\u{$A\land B$}\u{$A$}\DisplayProof
$\propto$
\a{$\Sigma'_{1,1}$}\n\u{$A\land B$}
\a{$[\neg (A\land B)]^1$}
\b{$\bot$}\n\u{$\Sigma'_{1,2}$}\n\u{$\bot$}\RightLabel{\scriptsize 1}\u{$A\land B$}\u{$A$}\DisplayProof
$\rhd$
\a{$\Sigma'_{1,1}$}\n\u{$A\land B$}\u{$A$}
\a{$[\neg A]^1$}
\b{$\bot$}\n\u{$\Sigma'_{1,2}$}\n\u{$\bot$}\RightLabel{\scriptsize 1}\u{$A$}\DisplayProof
    $\equiv\Pi_1$

If the end formula of $\Sigma'_{1,1}$ is not the conclusion of an introduction rule, then the end-formula of $\Sigma'_{1,1}$ is not a maximal formula and $G(\Pi_1) < G(\Pi)$ and $\Pi_1 \equiv \Pi'$.
If the end formula of $\Sigma'_{1,1}$ is the conclusion of an introduction rule, then $\Pi_1$ is of the form
\a{$\Sigma_3$}\n\u{$A$}
\a{$\Sigma_4$}\n\u{$B$}
  \b{$A\land B$}\u{$A$}
\a{$[\neg A]^1$}
\b{$\bot$}\n\u{$\Sigma'_{1,2}$}\n\u{$\bot$}\RightLabel{\scriptsize 1}\u{$A$}\DisplayProof
    which can be reduced to
\a{$\Sigma_3$}\n\u{$A$}
\a{$[\neg A]^1$}
\b{$\bot$}\n\u{$\Sigma'_{1,2}$}\n\u{$\bot$}\RightLabel{\scriptsize 1}\u{$A$}\DisplayProof
  $\equiv\Pi'$ and $G(\Pi')<G(\Pi)$.\\\\\\

\item 
  $\Pi\equiv$
\a{$[\neg (A\to B)]$}\n\u{$\Sigma_1$}\n\u{$\bot$}\u{$A\to B$}
\a{$\Sigma_2$}\n\u{$A$}
\b{$B$}\DisplayProof
$\propto$
\a{$\Sigma'_{1,1}$}\n\u{$A\to B$}
\a{$[\neg (A\to B)]^1$}
  \b{$\bot$}\n\u{$\Sigma'_{1,2}$}\n\u{$\bot$}\RightLabel{\scriptsize 1}\u{$A\to B$}
\a{$\Sigma_2$}\n\u{$A$}
  \b{$B$}\DisplayProof
$\rhd$\\\\\\

\a{$\Sigma'_{1,1}$}\n\u{$A\to B$}
\a{$\Sigma_{2}$}\n\u{$A$}
\b{$B$}
\a{$[\neg B]^1$}
\b{$\bot$}\n\u{$\Sigma'_{1,2}$}\n\u{$\bot$}\RightLabel{\scriptsize 1}\u{$B$}\DisplayProof
    $\equiv\Pi_1$

If the end formula of $\Sigma'_{1,1}$ is not the conclusion of an introduction rule, then $ G(\Pi_1) < G(\Pi)$ and $\Pi_1\equiv \Pi'$.
If the end formula of $\Sigma'_{1,1}$ is the conclusion of an introduction rule, then $\Pi_1$ is of the form

\a{$[A]$}\n\u{$\Sigma_3$}\n\u{$B$}\u{$A\to B$}
\a{$\Sigma_4$}\n\u{$A$}
  \b{$B$}
\a{$[\neg B]^1$}
\b{$\bot$}\n\u{$\Sigma'_{1,2}$}\n\u{$\bot$}\RightLabel{\scriptsize 1}\u{$B$}\DisplayProof
    which can be reduced to
\a{$\Sigma_4$}\n\u{$(A)$}\n\u{$\Sigma_3$}\n\u{$B$}
\a{$[\neg B]^1$}
\b{$\bot$}\n\u{$\Sigma'_{1,2}$}\n\u{$\bot$}\RightLabel{\scriptsize 1}\u{$B$}\DisplayProof
  $\equiv\Pi'$ and $G(\Pi')<G(\Pi)$.\\\\\\

 \item 
    $\Pi\equiv$
\a{$[\neg \Box A]$}\n\u{$\Sigma_1$}\n\u{$\bot$}\u{$\Box A$}\u{$A$}\DisplayProof
    $\propto$
\a{$\Sigma'_{1,1}$}\n\u{$\Box A$}
\a{$[\neg \Box A]^1$}
\b{$\bot$}\n\u{$\Sigma'_{1,2}$}\n\u{$\bot$}\RightLabel{\scriptsize 1}\u{$\Box A$}\u{$A$}\DisplayProof
    $\rhd$
\a{$\Sigma'_{1,1}$}\n\u{$\Box A$}\u{$A$}
\a{$[\neg A]^1$}
\b{$\bot$}\n\u{$\Sigma'_{1,2}$}\n\u{$\bot$}\RightLabel{\scriptsize 1}\u{$A$}\DisplayProof
$\equiv\Pi_1$

If the end formula of $\Sigma'_{1,1}$ is not the conclusion of an introduction rule, then $ G(\Pi_1) <  G(\Pi)$ and $\Pi_1\equiv \Pi'$.
If the end formula of $\Sigma'_{1,1}$ is the conclusion of an introduction rule, then $\Pi_1$ is of the form

\a{$\vec{\Psi}$}\n\u{$\Box\vec{B}$}
\a{$[\Box\vec{B}]^k$}\n\u{$\Lambda_1$}\n\u{$A$}
  \RightLabel{\scriptsize $k$}\b{$\Box A$}\u{$A$}
\a{$[\neg A]^l$}
\b{$\bot$}\n\u{$\Sigma'_{1,2}$}\n\u{$\bot$}\RightLabel{\scriptsize $k,l$}\u{$A$}\DisplayProof
    which can be reduced to
\a{$\vec{\Psi}$}\n\u{$(\Box\vec{B})$}\n\u{$\Lambda_1$}\n\u{$A$}
  \a{$[\neg A]^l$}
\b{$\bot$}\n\u{$\Sigma'_{1,2}$}\n\u{$\bot$}\RightLabel{\scriptsize $l$}\u{$A$}\DisplayProof
  $\equiv\Pi_2$

If one of the $\Box B_i$'s, say $\Box B_m$, were a maximal formula in $\Pi_2$, it would be a maximal formula in $\Pi$ and, as $\Pi$ is a critical derivation, $G(\Box B_m) < G(\Box A)$. Thus, $G(\Pi_2)<G(\Pi)$ and $\Pi_2\equiv\Pi'$.

\item 
    $\Pi\equiv$
\a{$[\neg\Box A]^k$}\n\u{$\Sigma_1$}\n\u{$\bot$}\RightLabel{\scriptsize $k$}\u{$\Box A$}
\a{$\vec{\Psi}$}\n\u{$\vec{H}$}
\a{$[\Box A]^l,[\vec{H}]^j$}\n\u{$\Sigma_2$}\n\u{$C$}\RightLabel{\scriptsize $l,j$}
  \t{$\Box C$}\DisplayProof
    $\propto$\\\\

\a{$\Sigma'_{1,1}$}\n\u{$\Box A$}
\a{$[\neg\Box A]^k$}\b{$(\bot)$}
\n\u{$\Sigma'_{1,2}$}\n\u{$\bot$}\RightLabel{\scriptsize $k$}\u{$\Box A$}
\a{$\vec{\Psi}$}\n\u{$\vec{H}$}
\a{$[\Box A]^l, [\vec{H}]^j$}\n\u{$\Sigma_2$}\n\u{$C$}\RightLabel{\scriptsize $l,j$}
\t{$\Box C$}\DisplayProof
$\rhd$\\\\\\

\a{$\Sigma'_{1,1}$}\n\u{$\Box A$}
\a{$\vec{\Psi}$}\n\u{$\vec{H}$}
\a{$[\Box A]^l,[\vec{H}]^j$}\n\u{$\Sigma_2$}\n\u{$C$}\RightLabel{\scriptsize $j,l$}
\t{$\Box C$}
\a{$[\neg\Box C]^k$}
\b{$(\bot)$}\n\u{$\Sigma'_{1,2}$}\n\u{$\bot$}\RightLabel{\scriptsize $k$}\u{$\Box C$}\DisplayProof
$\equiv\Pi_1$

Note that $\Sigma'_{1,1}$ is a subderivation of $\Sigma_1$.
Hence, if the subderivation
$\Lambda\equiv$
\a{$\Sigma'_{1,1}$}\n\u{$\Box A$}
\a{$\vec{\Psi}$}\n\u{$\vec{H}$}
\a{$[\Box A]^l,[\vec{H}]^j$}\n\u{$\Sigma_2$}\n\u{$C$}\RightLabel{\scriptsize $l,j$}
\t{$\Box C$}\DisplayProof
of $\Pi_1$
is a critical derivation, its lenght is smaller than the lenght of $\Pi$.
Thus, by the induction hypothesis, $\Lambda$ can be reduced to a derivation $\Lambda'$ such that
$G(\Lambda')<G(\Pi)$.
The result of replacing each occurrence of $\Lambda$ in $\Pi_1$ by $\Lambda'$ is a derivation $\Pi'$ such that $G(\Pi')<G(\Pi)$.

\item 
$\Pi\equiv$
\a{$[\neg F]^1$}\n\u{$\Sigma_1$}\n\u{$\bot$}\RightLabel{\scriptsize 1}\u{$F$}
\a{$\vec{\Psi}$}\n\u{$\vec{H}$}
\RightLabel{\scriptsize $r$}\b{$\bot$}\DisplayProof
$\propto$
\a{$\Sigma'_{1,1}$}\n\u{$F$}
\a{$[\neg F]^1$}
\b{$\bot$}\n\u{$\Sigma'_{1,2}$}\n\u{$\bot$}\RightLabel{\scriptsize 1}\u{$F$}
\a{$\vec{\Psi}$}\n\u{$\vec{H}$}
\RightLabel{\scriptsize $r$}\b{$\bot$}\DisplayProof$\rhd$
$\Pi_1\equiv$
\a{$\Sigma'_{1,1}$}\n\u{$F$}
\a{$\vec{\Psi}$}\n\u{$\vec{H}$}
\b{$\bot$}\n\u{$\Sigma'_{1,2}$}\n\u{$\bot$}\DisplayProof

The critical subderivation $\Lambda\equiv$ 
\a{$\Sigma'_{1,1}$}\n\u{$F$}
\a{$\vec{\Psi}$}\n\u{$\vec{H}$}
\b{$\bot$}\DisplayProof
 of $\Pi_1$ is smaller than $\Pi$.
Thus, by the induction hypothesis, $\Lambda$ can be reduced to a derivation $\Lambda'$ such that
$G(\Lambda')<G(\Pi)$.
By replacing each occurrence of $\Lambda$ in $\Pi_1$ by $\Lambda'$ we achieve the desired derivation.
This case deals with classical $\bot$ with the elimination of implication, conjunction and box.

\item 
$\Pi\equiv$
\a{$[\neg F]^1$}
\a{$\Sigma'$}\n\u{$\neg F\to C$}
\b{$C$}\n\u{$\Sigma$}\n\u{$\bot$}\RightLabel{\scriptsize 1}\u{$F$}\RightLabel{\scriptsize  el}\u{$B$}\DisplayProof
$\rhd\Pi'\equiv$
\a{$[F]^1$}\u{$B$}
\a{$[\neg B]^2$}\b{$\bot$}\RightLabel{\scriptsize 1}\u{$\neg F$}
\a{$\Sigma'$}\n\u{$\neg F\to C$}
\b{$C$}\n\u{$\Sigma$}\n\u{$\bot$}\RightLabel{\scriptsize 2}\u{$B$}\DisplayProof

If $\neg F\to C$ were a maximal formula in $\Pi'$, it would be a maximal formula in $\Pi$, which is not the case as, by hypothesis, $G(\Pi) = G(F)$ and $G(F)<G(\neg F\to C)$.
Hence, $G(\Pi)<G(\Pi')$.
\end{enumerate}
\end{proof}

\section{Conclusions}

This work presented the problem pointed out by Maria da Paz Medeiros \cite{Medeiros2006} on the normalisation procedure proposed by Dag Prawitz \cite{Prawitz2006}, followed by the system that she proposed \cite{Medeiros2006}, the \textbf{NS4}.
She presented a normalisation proof for this system for which we presented the problems pointed out by Yuuki Andou \cite{Andou2009} and finally we presented a proof of the Normalisation Theorem for \textbf{NS4}.

Among other deductive systems for $\textbf{S4}$, there are some where the Normalisation Theorem holds, like Sequent Calculus.
There is also a Natural Deduction with Labels system by Alex Simpson \cite{Simpson1994} for which the Normalisation Theorem holds.
But the system proposed by Dag Prawitz \cite{Prawitz2006} and Maria da Paz Medeiros \cite{Medeiros2006} are pure Natural Deduction systems, without semantic interferences (as the labels from the system of Alex Simpson) for which there are no previous proof of the Normalisation Theorem known into the available literature.
Yuuki Andou showed that for any proof of {\bf S4} there is a normalised proof via cut-elimination~\cite{Yuuki2009} but did not present a normalisation procedure.
We fulfilled it by presenting a correction in Medeiros' proof that lead to a normalised Natural Deduction system for {\bf S4}, the {\bf NS4} system.

\section*{Acknowledgements}

The authors thank CNPq, CAPES and FAPERJ for partially supporting this work. The authors would like also to thank Prof.\ Luiz Carlos Pereira (Departamento de Filosofia, Pontif\'\i cia Universidade Cat\'olica do Rio de Janeiro) for his advice and insights.

\bibliographystyle{agsm}
\bibliography{ref}
 
\end{document}